\theoremstyle{definition}
\newtheorem{lem}{Lemma}
\begin{document}
\title{Everlasting Secrecy by Exploiting Non-Idealities of the Eavesdropper's Receiver}
\date{}
\author{
\IEEEauthorblockN{Azadeh Sheikholeslami,\IEEEmembership{ Student Member, IEEE}, Dennis Goeckel,\IEEEmembership{ Fellow, IEEE} and Hossein Pishro-Nik, \IEEEmembership{ Member, IEEE}
}\\
\thanks{Manuscript received September 15, 2012; revised January 19, 2013; accepted April 15, 2013. This work appears, in part, at the 2012 Allerton Conference on Communication, Control, and Computing. This work was supported by the National Science Foundation under grants CNS-0905349, ECCS-0725616, and CIF-1249275.}
\thanks{Authors are with the Electrical and Computer Engineering  Department, University of Massachusetts,
Amherst, MA (emails:\{sheikholesla,goeckel,pishro\}@ecs.umass.edu).}}
\maketitle
\begin{abstract}
Secure communication over a memoryless wiretap channel in the presence of a passive eavesdropper is considered.  Traditional information-theoretic security methods require an advantage for the main channel over the eavesdropper channel to achieve a positive
secrecy rate, which in general cannot be guaranteed in wireless systems.  Here, we exploit
the non-linear conversion operation in the eavesdropper's receiver to obtain the desired advantage - even when the eavesdropper has perfect access to the transmitted signal at the input to their receiver.  The basic idea is to employ an ephemeral cryptographic key to force the eavesdropper to conduct two operations, at least one of which is non-linear, in a different order than the desired recipient. 
Since non-linear operations are not necessarily commutative, the desired advantage can be obtained and information-theoretic secrecy achieved even if the eavesdropper is given the cryptographic key immediately upon transmission completion.  
In essence, the lack of knowledge of the key during the short transmission time inhibits the recording of the signal in such a way that the secret information can never be extracted from it. The achievable secrecy rates for different countermeasures that the eavesdropper might employ are evaluated. It is shown that even in the case of an  eavesdropper with uniformly better conditions (channel and receiver quality) than the intended recipient, a positive secrecy rate can be achieved.
\end{abstract}
\begin{IEEEkeywords}
Everlasting secrecy, Secure wireless communication, random power modulation,   non-idealities of receiver.
\end{IEEEkeywords}
\section{Introduction} \label{sec:1}
Wireless networks, due to their broadcast nature, are vulnerable to being 
overheard, and hence security is a primary concern.  The 
standard method of providing security against eavesdroppers is to 
encrypt the information so that it is beyond the eavesdropper's 
computational capabilities to decrypt the message \cite{stinson2006cryptography}; 
however, the vulnerability shown by many implemented cryptographic schemes, 
the lack of a fundamental proof establishing the difficulty of the problem 
presented to the adversary, and the potential for transformative 
changes in computing motivate forms of security that are provably 
everlasting.  In particular, when a cryptographic scheme is employed, the 
adversary can  record the clean cypher and recover it later when the 
cryptographic algorithm is broken \cite{bensonverona}, which is not 
acceptable in sensitive applications requiring everlasting secrecy. 
The desire for such everlasting security motivates considering emerging 
information-theoretic approaches, where the eavesdropper is unable to extract 
from the received signal any information about the secret message. 

In 1949, Shannon introduced information-theoretic, or perfect, 
secrecy \cite{shannon1949communication}.  If the uncertainty of the 
message after seeing the cypher is equal to the uncertainty of the message 
before seeing the cypher, we have perfect secrecy without any condition on 
the eavesdropper's capabilities. Wyner later showed  that if the eavesdropper's channel is degraded
 with respect to the main channel, adding some randomness to the codebook 
allows the achievement of a positive secrecy rate \cite{wyner1975wire}.  Csisz{\'a}r and 
K{\"o}rner  extended the idea to more general cases, where the eavesdropper's 
channel is not necessarily degraded with respect to the main channel, but 
it must be ``more noisy'' or ``less capable'' than the main channel 
\cite{csiszar1978broadcast}.  When such an advantage does not exist,
one can turn to approaches based on ``public discussion'' \cite{maurer1993secret,ahlswede1993common},
but these approaches, while they could be used to generate an
information-theoretically secure one-time pad, are generally envisioned for 
secret key agreement to support a cryptographic approach 
\cite[Chapter 7.4]{bloch2011physical} rather than  one-way secret communication. 
We will show later the relation between our proposed scheme and public discussion, noting, in particular, that the proposed scheme can be used in conjunction with public discussion when appropriate.

Consequently, the desirable situation for achieving information-theoretic 
secrecy is to have a better channel from the transmitter to the intended receiver 
than that from the transmitter to the eavesdropper.
  However, this is not always 
guaranteed, particularly in wireless systems where the eavesdropper can have a 
large advantage over the intended receiver.
In the case of a passive adversary, 
the eavesdropper can be very close to the transmitter or it can use a directional 
antenna to improve its received signal, while there is often no way for the legitimate
nodes to know the eavesdropper's location or its channel state information.   
Recent authors have considered approaches that relax the need for assumptions on 
Eve's location or channel in one-way systems. 
For cases when the eavesdropper location is unknown (which means the case of a ``near
Eve'' must be considered), approaches largely based on the cooperative jamming approach of \cite{negi2005secret} and \cite{goel2005secret} have been considered \cite{he2008two,lai2007cooperative}.
However, all of these approaches require either
multiple antennas, helper nodes, and/or fading (for example, \cite{gopala2008secrecy,Dennis2011JSAC,sheikholeslami2012physical}), and many are susceptible to attacks
such as pointing directive antennas at one or both communicating parties.

For a one-way scenario with a single antenna where Bob's channel is worse than Eve's, 
Cachin and Maurer \cite{cachin1997unconditional} exploited the realizability of hardware 
to consider the case of everlasting security, as is our interest. In particular, they 
introduced the ``bounded storage model''  in which the receiver
cannot store the information it would need to eventually break the cypher. This novel 
approach suffers from two shortcomings: (1) by Moore's Law (see NAND scaling plot 
at \cite{kuchibhatla2010imft}), the density of memories increases at an exponential rate; 
(2) memories can be stacked arbitrarily subject only to (very) large space limitations.
Hence, although the bounded storage
 model is a viable approach to everlasting security, 
it is difficult to pick a memory size beyond which it will be effective, making its 
employment for secret wireless communication difficult.
Rather than attacking the memory in the receiver back-end, our contention is that one should instead consider attacking  the receiver front-end and analog-to-digital (A/D) conversion process, where technology progresses slowly and there exist well-known 
techniques for severely handicapping the component. And, unlike memory, A/D's cannot be
stacked arbitrarily, as clock jitter prevents the timing required for bit detection; in fact, 
high-quality A/D's already employ parallelization to the limit of the jitter. 
And, importantly from a long-term perspective, there is a fundamental bound 
on the ability to perform A/D conversion \cite{krone2009fundamental,krone2010fundamental}. 
\begin{figure}
\begin{center}
 \includegraphics[width=.5\textwidth]{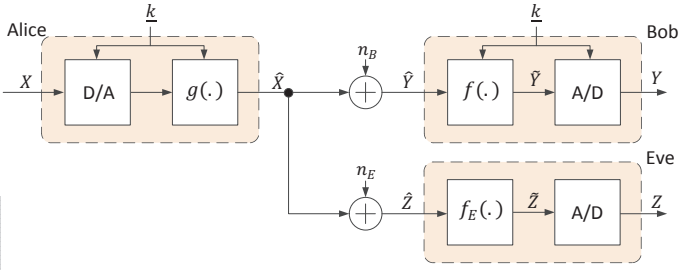}
 \end{center}
 \caption{The message $X$ is observed at Bob and Eve through the transmitter, the AWGN channels with different noise variances, and their respective receivers with (possibly nonlinear) functions  $g(.)$, $f(.)$, and $f_E(.)$. The sequence $\underline{k}$ is a cryptographic key shared by Alice and Bob, which
is assumed to be obtained by Eve immediately \textit{after} she has recorded $Z$.}
 \label{fig:p}
 \end{figure}
Consider the  channel model shown in Figure \ref{fig:p}, which reflects the understanding
that in an adversarial game in modern communication systems, it is the interference effects 
on wideband receiver front-ends rather than the baseband processing that is the significant 
detriment \cite{Hashemi2011RFIC}.  In particular, the signal is subject to a variety of 
distortions due to the RF front-end of the receiver and the analog-to-digital conversion. 
 A large interferer, even if it is orthogonal to the signal of interest and thus (supposedly)
easily rejected by  baseband processing, can saturate the receiver front-end, leading 
to nonlinearities, and, of particular interest here, reducing the receiver's 
dynamic range (i.e. resolution) significantly.

The primary focus of this paper is to exploit the receiver processing effects for security. 
In particular, based on a pre-shared key between Alice and Bob that only needs to be
kept secret for the duration of the wireless transmission (i.e. it can be given to 
Eve immediately afterward), we consider how inserting intentional (but known to
Bob) distortion on the transmitted signal can provide information-theoretic
security.  In particular, since Bob knows the distortion, he can undo its
effect before his A/D, whereas Eve must store the signal and try to compensate
for the distortion after her A/D.  Since the A/D is necessarily non-linear, 
the operations are not necessarily commutative and there is the potential for
information-theoretic security.  This paper introduces this idea and initiates
its investigation.

As a first example, we perform a rapid power modulation between two vastly different power levels at the transmitter and put the reciprocal of that
power gain before Bob's A/D.  
In particular, cellular (and other) networks
usually have significantly more power available for users at many locations
than their lowest data rate requires for successful transmission. 
For example, users near a base station in a cellular system have the capability
to transmit significantly more power than the minimum required to convey
a high-quality voice signal.  
Hence, a secure communication system to cover
a restricted area (e.g. a company's building) built on analogous link budgets
to cellular technology would have the capability to transmit excess power to
enable secure communication, as follows.  
Suppose Alice employs an ephemeral
cryptographic key known only to her and Bob to rapidly modulate her transmit
power between the minimum required for successful transmission and the maximum
available from her radio.  This power modulation can be done quite rapidly,
as modern power amplifiers can easily have their power switched at high bandwidths
\cite{kahn1952single}\cite[Chapter~7]{kenington2000high}.  
Bob, since he knows the key, places a gain before
his A/D that changes rapidly in concert with the transmitted power to ensure
that the received signal is matched to the range of the A/D.  
Since the power
can be changed every symbol, Eve cannot use any type
of automatic gain control (AGC) loop and is left trying to select a gain that trades off resolution and the probability of overflow of her A/D.
By exploiting the resulting distortion, information-theoretic secrecy can be obtained, even if Eve is given the key immediately after message transmission.

The rest of paper is as follows. Section \ref{sec:2} describes the system model, metrics, and the proposed idea in detail. In Section \ref{sec:3}, the proposed method is applied to settings with noisy channels and noiseless channels, respectively, to find achievable secrecy rates in each case, and an asymptotic analysis of the proposed method is provided. 
In Section \ref{sec:4}, the results of numerical examples for  various   realizations of the system are presented. 
Conclusions and ideas for future work  are discussed in Section \ref{sec:7}.

\section{System Model and Approach}\label{sec:2}
\subsection{System Model and Metric}
We consider a simple wiretap channel, which consists of a transmitter, Alice, a receiver, Bob, and an eavesdropper, Eve.
 Eve is a passive eavesdropper, i.e. she just tries to obtain as much information as possible to recover the message that Alice sends and she does not attempt to actively  thwart (i.e. via jamming, signal insertion) the legitimate nodes. 
 Therefore, the location and channel state information of Eve can be difficult to obtain and thus is assumed  unknown to the legitimate nodes.

 We assume that Alice and Bob either pre-share a (very) short initial  key or that they employ a standard  key agreement scheme (e.g. Diffie-Hellman \cite{diffie1976new}, which is very efficient in passive environments) to generate a shared key. 
This initial key will be used to generate a very long key-sequence by using a standard cryptographic method such as AES in counter mode (CTR). Considering the fact that for each $2^{38}$ bits of the key-sequence, a 96-bit new initial vector (IV) or a 128-bit new initial key must be sent from Alice to Bob \cite[Chapter 5]{paar2010understanding}, the secrecy rate overhead that this key (or IV) exchange imposes is at most $128/2^{38}=2^{-29}$, which is negligible.
Another method is to use standard methods that are specifically designed for generating stream-ciphers, such as Trivium (more methods can be found in \cite{robshaw2008new}), which can generate $2^{64}$ bits of key-sequence for a 80-bit key and a 80-bit IV. Thus, the rate overhead that Trivium places on our scheme will be $80/2^{64}<2^{-55}$, which is negligible.

  By using these cryptographic algorithms to perform key-expansion, we assume that Eve
cannot recover the initial key before the key renewal and during the transmission
period, i.e. we assume that the computational power of Eve during the time of transmission is not unlimited. However, our system design only employs the key ephemerally. In fact,  we assume (pessimistically) that Eve is handed the full key (and not just the initial key) as soon as transmission is complete.  
Hence, unlike cryptography, even if the encryption system is broken later, the eavesdropper obtains access to an unlimited computational power, or other forms of computation such as quantum computers are implemented, Eve will not have enough information to recover the secret message. 

 We consider a memoryless one-way communication system, and assume that both Bob and Eve are at a unit distance from the transmitter by including variations in the path-loss in the noise variance. 
 Thus, the channel gain of both channels is unity and both channels experience additive white Gaussian noise (AWGN). 
 Let $n_B$ and $n_E$ denote the zero-mean noise processes  at  Bob's  and Eve's receivers with variances $\sigma_B^2$ and ${\sigma_E}^2$, respectively.  
 Let $\hat{X}$ denote the input of both channels, $\hat{Y}$ denote  the received signal at Bob's receiver, and $\hat{Z}$  denote  the received signal at Eve's receiver. 
 The signal at Bob's receiver is:
 \[\hat{Y}=\hat{X}+n_B,\]
 and the signal at Eve's receiver is:
 \[\hat{Z}=\hat{X}+n_E.\]
  We assume that location of Alice is known to Eve. Also, Alice knows either Eve's location, or in the case that she does not know Eve's location, she  sets a value that works over a set of locations (for example, the minimum  possible distance between Alice and Eve). If the location of Eve is completely unknown, Eve's distance can assumed to be zero and, as will be shown in Section \ref{sec:4}, the legitimate nodes will still be able to obtain a positive secrecy rate by using the proposed scheme.
	
 Both Bob and Eve employ high precision uniform analog-to-digital converters. 
 The effect of the A/D on the received signal (quantization error) is modeled by a quantization noise due to the limitation in the size of each quantization level, and a clipping function due to the quantizer's overflow. 
 The quantization noise in this case is (approximately) uniformly distributed \cite{widrow2008quantization}, so we will assume it is uniformly distributed throughout the paper. 
 For an \textit{m}-bit quantizer ($b=2^m$ gray levels) over the full dynamic range $[-l,l]$, two adjacent quantization levels are spaced by  $\delta={2l}/{b}$, and thus the quantization noise is uniformly distributed over an interval of length $\delta$. 
 Quantizer overflow happens when the amplitude of the received signal is greater than the quantizer's dynamic range, which can be modeled by a clipping function. We assume that Alice knows an upper bound on Eve's current A/D conversion ability (without any assumption on Eve's future A/D conversion capabilities).

Let $X$ denote the current code symbol, which we assume  is taken from a standard Gaussian codebook where each entry has variance $P$, i.e. $X\sim \mathcal{N}(0,P)$. 
Note that although the Gaussian codebook is optimal to achieve the secrecy capacity in the case of AWGN wiretap channels,  because we consider  quantization errors in our model, the Gaussian codebook is no longer optimum, implying that our results represent achievable rates but not upper bounds.
 
From \cite{bloch2008secrecy}, for an arbitrary stationary memoryless wiretap channel with arbitrary input and output alphabets, any secrecy rate 
 \[
\hat{R}_s< \max_{X\rightarrow YZ} [I(X;Y)-I(X;Z)] 
 \]
 is achievable. 
 
 Now, we define the following max-min criteria:
 \begin{equation}
R_s=\max_{s\in\mathcal{S}}\min_{s'\in \mathcal{S}'} \hat{R}_s(s,s')
\label{eq:101}
\end{equation}
where $\mathcal{S}'$ is the set of strategies that Eve can take during transmission, and $\mathcal{S}$ is the set of strategies that Alice can take.
Eve's problem is to find a strategy, $s'\in\mathcal{S}'$,  to modify her channel to minimize the secrecy rate. On the other hand, Alice's problem is to find a strategy, $s\in\mathcal{S}$, to modify the transmit signal to maximize this worst-case secrecy rate.

 When  cryptographic key expansion schemes are employed, the key-sequence is not quite memoryless. 
 But, based on the assumption that Eve cannot restrict the rest of the key sequence based on the observed symbols,  we assume independence. Hence, although in general the strategy taken by Eve is not memoryless, here  considering strategies with memory does not help her to increase the information-leakage;
  thus, we  restrict $\mathcal{S}'$ to memoryless strategies. 
  Further, we give  the key to Eve after completion of the transmission and show she cannot recover the lost information she would need to obtain the secret message from the recorded symbols.

\subsection{General Nonlinearity: Rough Analysis}
Our goal is to  consider how Alice and Bob can employ  bits of the
shared  key to modify their radios as shown in Figure \ref{fig:p} to gain (or maximize) an information-theoretic advantage. 
For now, assume that they insert general memoryless nonlinearities $g(.)$ at the transmitter and 
$f(.)=g^{-1}(.)$ at the receiver based on the key. 
Suppose  that Eve is able to obtain the key just after the transmission is finished; considering for the moment that she applies $g^{-1}(.)$ to $Z$, one sees how the security is (potentially) obtained: 
Bob sees $g(X)$ through $g^{-1}(.)$ and the A/D, whereas Eve sees those operations in reverse. 
Since nonlinear operations are not (necessarily) commutative, the signals are not the same and there is the potential for some form of information-theoretic security. 

Now, stepping back to allow Eve to use the long key sequence, $\underline{k}$, in whatever manner she wants after
she has recorded the transmission yields an illustrative information-theoretic  model. 
In particular, using the same random coding arguments as for fading channels,  consider a collection  of functions $\mathcal{G}$, from which $\underline{k}$ selects a function  $g(.)$ for each transmitted symbol; then,  the secrecy rate is:
\[
R_s=E_{g(.)}[I(X;Y|g(.))-I(X;Z|g(.))]
\]
Let us be pessimistic and assume $\sigma_E^2=0$. Furthermore, to get some insight, assume temporarily that 
$\sigma_B^2=0$, corresponding to a short-range situation which is not power-limited. For $\sigma_B^2=0$, $Y$ does not depend on $\underline{k}$ and thus using the approach for analyzing quantizers of \cite[pg. 251]{cover2006elements}, which is accurate at high resolution:
\begin{align*}
&R_s=E_{g(.)}[I(X;Y)-I(X;Z|g(.))]\\
&=E_{g(.)}[H(Y)-H(Y|X)-(H(Z|g(.))-H(Z|X,g(.)))]\\
&\approx E_{g(.)}[h(\tilde{Y})-\log(\delta)-(h(\tilde{Z}|g(.))-\log(\delta))]\\
&= E_{g(.)}[h(\tilde{Y})-h(\tilde{Z}|g(.))]\\
&=E_{g(.)}[h(X)-h({g(X)})]
\end{align*}
where $\tilde{Y}$ and $\tilde{Z}$ are the inputs to Bob and Eve's A/D converters, respectively.
It then becomes apparent that
the gain observed here for high-resolution A/D's at both Bob and Eve is a shaping gain between $X$ and $g(X)$.   Whereas we think of shaping gains
as tending to be relatively small (1.53 dB on the Gaussian
channel \cite{cald1990}), that is because the generally
considered gains are between the optimal (Gaussian) shaping
and a standard but reasonable (uniform) shaping.  In our design
scenario, if we are able to severely distort the signal, the gains
can become enormous.
We quickly caveat this conclusion by noting that the assumption $\sigma_B^2=0$ is critical, since those $g(.)$ which are most distorting can also cause significant ``noise enhancement'' on the channel from Alice to Bob. Hence, unless the noise is truly negligible (i.e. very short range communication), judgment should be reserved on the applicability of the technique until $\sigma_B^2\neq 0$ is considered in Section \ref{sec:3}.

\subsection{Rapid power modulation for secrecy}
 \begin{figure}
\begin{center}
 \includegraphics[width=.5\textwidth]{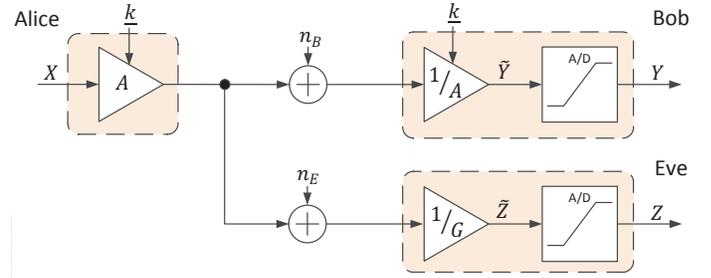}
 \end{center}
 \caption{Alice and Bob share a cryptographic key that determines the value of $A$ at each time instance. Eve puts a (possibly variable) gain before her A/D to decrease the A/D erasures and/or overflows and hence increase the information leakage.}
 \label{fig:p10}
 \end{figure}
For the rest of the paper, we simplify the operator $g(.)$ to a random gain to consider a practical architecture easily
implemented and discuss specific operating scenarios. 
Our goal is to  achieve a positive secrecy rate by confusing Eve's A/D. 
Throughout this paper we assume that Eve is able to employ just one A/D, and Eve with multiple A/D's is briefly discussed in Section \ref{sec:7}.
The random gain is from a fixed probability distribution and is multiplied to the signal amplitude of each symbol that Alice transmits. 
Suppose that $A$ denotes  the random variable associated with this random gain, and  the probability  density function (pdf) of this gain is  $p_A(a)$ where $a \in \mathcal{A}$ (see Figure \ref{fig:p10}). 
The pdf of $A$ is known to all nodes, but only  legitimate nodes know the exact sequence of values of $A$ (i.e. $ a_1,a_2,a_3,\cdots$) that is applied to the symbol sequence. 

We want to find a probability distribution for $A$ that maximizes this secrecy rate  such that it does not change the average power of the transmitted signal, i.e. $\text{E}[|A|^2]=1$. 
To control the number of key bits required, we consider that $|A|$ is drawn from one of two levels $A_1$ and $A_2$ with random polarity (i.e. $\mathcal{A}=\{A_1,-A_1,A_2,-A_2\}$):
\[
Pr(A=a)=\left\{
\begin{array}{l l}
p, & \quad a=A_{1}\\
1-p,& \quad a=A_{2}\\
\end{array} \right.
\]
and $Pr\{A>0\}=Pr\{A<0\}=\nicefrac{1}{2}$. Suppose that $A_1$ is the large gain and $A_2$ is the small gain that the transmitter applies and denote the ratio between them $r=\frac{A_1}{A_2}$.

Since Bob shares the (long) key with Alice, he easily ``inverts'' the gain $A$ to operate his A/D properly, whereas Eve will struggle with such. 
In essence, we are inducing a fading channel at Bob that he is able to equalize \textit{before} his A/D, whereas Eve cannot. 
Bob applies the reciprocal of $A$ before his A/D and thus given $A$,  the signal that Bob's A/D sees is:
\begin{equation}
\tilde {Y}=X+\frac{n_B}{A}
\label{eq:1a}
\end{equation}
To cancel the effect of this gain, Eve also applies an arbitrary (possibly random) gain, $1/G$.
So, the signal at Eve's A/D given $A$ and $G$ is:
\begin{equation}
\tilde {Z}=\frac{A}{G}X+\frac{n_E}{G}
\label{eq:1b}
\end{equation}
Suppose that Eve knows the pdf of $A$; hence, she tries to find a probability density function $p_G(g)$ for $G$ such that it minimizes the secrecy rate $R_s$. 
On the other hand, Alice sets the pdf parameters  such that no matter what $p_G(g)$ Eve chooses, some secrecy rate $R_s$ is always guaranteed.
Hence, the maxi-min criteria in (\ref{eq:101}) turns into:
\begin{equation}
R_s=\max_{p, A_1, A_2}\min_{p_G(.)} \hat{R}_s(p_G(.),A_1,A_2,p)\label{eq:10}
\end{equation}
Obviously, larger $r=\frac{A_1}{A_2}$ leads to more eavesdropper confusion. However, because $E[|A|^2]=1$, $r \gg 1$ leads to a small $A_2$, and Bob then suffers noise enhancement. We talk about the choice of $r$ in the next paragraph.

Recall the potential operating scenario from Section \ref{sec:1}, and assume that system radios are operating in a scenario where they have adequate power amplifier headroom,
 as in the ``near'' situation in cellular systems \cite{kohno1995spread}, and the user's noise is relatively negligible. 
However, an Eve at the same range can also intercept the signal. 
By changing  the power of the transmitters  between the power-controlled level (e.g. $A_2$), where it meets the receiver requirements and its maximum power (e.g. $A_1$), Bob, knowing the sequence, obtains a signal that is at least equivalent to operating at its power controlled level and thus sees little  degradation in information transmission. The ratio between the large gain and the small gain, $r$, can be chosen such that in the case of $A=A_2$ (small gain), the minimum acceptable signal level at Bob's receiver is satisfied. 
On the other hand, Eve's A/D struggles even to record a reasonable form of the signal; hence, she sees significant degradation, and information-theoretic security is obtained. 
Also, because the power level is changed very fast (at every symbol), the automatic gain control (AGC) at the eavesdropper's receiver cannot follow the deep fades that cause erasures and/or strong signals that cause A/D saturation.

To choose optimum values for $A_1$, $A_2$, and $p$, note that the following constraints must be met:
\begin{equation}
\frac{A_1}{A_2}=r\quad \text{and}\quad p A_1^2+(1-p) A_2^2=1
\label{eq:cons}
\end{equation}
Hence, two of these values are constrained by the system parameter $r$ and conservation of transmission power, and the transmitter is free to choose only one (e.g. $p$).  Thus, equation (\ref{eq:10}) reduces to:
\begin{equation}
R_s=\max_{p}\min_{p_G(.)} \hat{R}_s(p_G(.),p)\label{eq:11}
\end{equation}
Eve can employ a number of countermeasures  to decrease $R_s$. 
She can find an optimum probability density function  that minimizes $R_s$, or  she can employ a better A/D to decrease erasures and/or overflows of her A/D.
In the  sequel, we will consider these scenarios and examine the  secrecy rate $R_s$ that can be achieved by the proposed method in each case.

\section{Achievable Secrecy Rates}\label{sec:3}
In this section the secrecy rates that can be achieved considering the non-idealities of the A/D's at the front-ends of  Bob and Eve's receivers are studied. In the first part, the channel between Alice and Bob and the channel between Alice and Eve are considered to be AWGN channels. In the second part, to get more insight into the problem, the noise is removed from the channels and only the effect of A/D's on the signals will be considered. 

\subsection{Noisy channels}
Consider the derivation of $I(X;Y|A=a)-I(X;Z|A=a,G=g)$. Clearly, each of $h(Y|A=a)$, $h(Y|X,A=a)$, $h(Z|A=a,G=g)$, and $h(Z|X,A=a,G=g)$ are required. 
Since for given gains at Alice and Eve, i.e. $A=a$ and $G=g$, by substituting $Z$ with $Y$ and $g$ with $a$ (Figure \ref{fig:p10}), the equations for $h(Y|A=a)$, $h(Y|X,A=a)$ can be derived from the equations for  $h(Z|A=a,G=g)$ and $h(Z|X,A=a,G=g)$, we just show the calculations for the latter here.
In this section all the mutual information, entropy, and probability density functions are calculated given that $A=a$ and $G=g$.

Recall that throughout this paper the non-idealities of the A/D's are modeled by an additive uniformly distributed quantization noise and a clipping function; hence,the signal at the output of Eve's A/D is:
\[
Z=\left\{
\begin{array}{l l}
\tilde{Z}+n_q, & \quad |\tilde{Z}|<l\\
+l, & \quad \tilde{Z}>l\\
-l, & \quad \tilde{Z}<-l\\
\end{array} \right.
\]
where $\tilde{Z}=\frac{aX}{g}+\frac{n_E}{g}$ and $l$ is determined by the range $[-l,l]$ of the A/D. 
Thus, $\tilde{Z}$ has a zero-mean Gaussian distribution with variance $\frac{a^2P+\sigma_E^2}{g^2}$, i.e. $\tilde{Z}\sim\mathcal{N}(0,\frac{a^2P+\sigma_E^2}{g^2})$.  Let us define the random variable $E'$ that takes the values $E'_1$, $E'_2$, and $E'_3$, where $E'_1=\{|\tilde{Z}|<l\}$ is the event that the signal before Eve's A/D falls in its dynamic range, and the events  $E'_2=\{\tilde{Z}>l\}$ and $E'_3=\{\tilde{Z}<-l\}$  correspond to clipping (A/D overflow).   We have,
\[
h(Z)=h(Z|E')+H(E')-H(E'|Z),
\]
Since $E$ is completely determined by $Z$, $H(E|Z)=0$; thus,
\begin{align*}
h(Z)&=\sum_{i=1}^3h(Z|E'_i)p(E'_i)-\sum_{i=1}^3 p(E'_i)\log (p(E'_i)).
\end{align*}
In the case of clipping we have $h(Z|E'_2)=h(Z|E'_3)=0$. The probability that the A/D is not in overflow is:
\[
p(E'_1)=1-2Q\left(\frac{gl}{\sqrt{a^2P+\sigma_E^2}}\right),
\]
and the probability that her A/D overflows is given by:
\[
p(E'_2)=p(E'_3)=Q\left(\frac{gl}{\sqrt{a^2P+\sigma_E^2}}\right),
\]
Then, $h(Z|E'_1)$ is calculated as:
\begin{align}
\nonumber &f_{Z|E'_1}(z)=f_{\tilde{Z}|E'_1}(z)*f_{n_q}(z)\\\nonumber
&=\frac{1}{\delta}\int_{-l}^{l}f_{\tilde{Z}}(s)U_{[-{\delta}/{2},{\delta}/{2}]}(z-s)ds\\\nonumber
&=\frac{1}{\delta}\int_{\max(-l,z-\delta/2)}^{\min(l,z+\delta/2)}f_{\tilde{Z}}(s)ds\\\nonumber
&\approx \frac{1}{\delta}\int_{z-\delta/2}^{z+\delta/2}f_{\tilde{Z}}(s)ds\\
&=\frac{1}{\delta}\left(Q\left(\frac{g(z-\delta/2))}{\sqrt{a^2P+\sigma_E^2}}\right)-Q\left(\frac{g(z+\delta/2)}{\sqrt{a^2P+\sigma_E^2}}\right)\right),\: |z|<l \label{eq:n4}
\end{align}
where $U_{[-{\delta}/{2},{\delta}/{2}]}(.)$ is the rectangle function on $[-{\delta}/{2},{\delta}/{2}]$, i.e. the value of the function is 1 on the interval  $[-{\delta}/{2},{\delta}/{2}]$ and is zero elsewhere. The reason that the approximation is valid is that we assume  high precision A/Ds are applied and thus $\delta \ll l$. Hence,
\begin{align}\nonumber
h(Z)=&\left(1-2Q\left(\frac{gl}{\sqrt{a^2P+\sigma_E^2}}\right)\right)\\
&\int_{-l}^l -f_{Z|E'_1}(z)\log(f_{Z|E'_1}(z)) dz +H(E').
\label{eq:n5}
\end{align}
Similarly, for  $h(Z|X)$ we have, 
\[h(Z|X)=h(Z|X,E')+H(E'|X)-H(E'|X,Z)
\]
Since $H(E'|X,Z)=0$,
\begin{align}
 h(Z|X)=\sum_{i=1}^3h(Z|E'_i,X)p(E'_i|X)+H(E'|X)
\label{eq:n7}
\end{align}
where $h(Z|E'_2,X=x)=h(Z|E'_3,X=x)=0$. The probability that Eve's A/D works in its dynamic range given $X$ is,
\begin{align*}
p(E'_1|X=x)&=p(|\tilde{Z}|<l|X=x)\\
&=p(|\frac{ax}{g}+\frac{n_E}{g}|<l)\\
&=p(-(gl+Ax)<n_E<gl-Ax))\\
&=Q\left(\frac{-(gl+Ax)}{\sigma_E}\right)-Q\left(\frac{gl-Ax}{\sigma_E}\right)
\end{align*}
and the probability that her A/D overflows,
\begin{align*}
p(E'_2|X=x)&=p(\tilde{Z}>l|X=x)\\
&=p(\frac{ax}{g}+\frac{n_E}{g}>l)\\
&=Q\left(\frac{gl-Ax}{\sigma_E}\right),
\end{align*}
and,
\begin{align*}
p(E'_2|X=x)&=p(\tilde{Z}<-l|X=x)\\
&=p(\frac{ax}{g}+\frac{n_E}{g}<-l)\\
&=Q\left(\frac{gl+Ax}{\sigma_E}\right).
\end{align*}

In order to calculate $h(Z|E'_1,X)$,  $f_{Z|E'_1,X=x}(z)$ is required. The signal before Eve's A/D $\tilde{Z}$ given $X=x$ has a Gaussian distribution with  mean $Ax/g$ and variance ${\sigma_E^2}/{g^2}$ within interval $|{ax}/{g}+{n_E}/{g}|<l$ and zero elsewhere. Hence,
\begin{align*}
&f_{Z|E'_1,X=x}(z)=f_{\tilde{Z}|E'_1,X=x}(z)*f_{n_q}(z)\\
&\approx\frac{1}{\delta}\int_{z-\delta/2}^{z+\delta/2} f_{\tilde{Z}|X=x}(z)ds\\
&=\frac{1}{\delta}\left(Q\left(\frac{g(z-\delta/2)-Ax}{\sigma_E}\right)-Q\left(\frac{g(z+\delta/2)-Ax}{\sigma_E}\right)\right),
\end{align*}
 for $|z|<l$, and,
\begin{align}
\nonumber
&h(Z|X)=\\ \nonumber
&\int_{-\infty}^{\infty}\Big[\int_{-l}^l  -f_{Z|E'_1,X=x}(z)\log(f_{Z|E'_1,X=x}(z))dz\: p(E'_1|X=x)\\
&\quad -\sum_{i=1}^3p(E'_i|X=x)\log(p(E'_i|X=x))\Big] f_X(x) dx
\label{eq:n6}
\end{align}
By substituting $h(Z)$ from (\ref{eq:n5}) and $h(Z|X)$ from (\ref{eq:n6}) in the following equation,
\begin{equation}
I(X;Z)=h(Z)-h(Z|X),\label{eq:n13}
\end{equation}
the mutual information between Alice and Eve given $A=a$ and $G=g$, can be found. 
Also, by substituting $Z$ with $Y$, $\sigma_E^2$ with $\sigma_B^2$, and $g$ with $a$ in (\ref{eq:n5}), (\ref{eq:n6}), and 
(\ref{eq:n13}), the mutual information between Alice and Bob given $A=a$ can be found,
\begin{equation}
I(X;Y)=h(Y)-h(Y|X)\label{eq:n12}
\end{equation}
The achievable secrecy rate can be found by substituting these mutual informations into the following equation:
\begin{align}
 R_s=E_{G,A}\left[I(X;Y)-I(X;Z)\right] \label{eq:9}
\end{align}

Alice is able to choose $p$ to maximize the $R_s$ that can be achieved by this method; 
on the other side, Eve tries to minimize $R_s$ by choosing an appropriate $p_G(.)$.
The following lemma shows that for an arbitrary discrete alphabet for $G$, choosing a single value  (which depends on the value of $p$) with probability one minimizes the secrecy rate, and thus is the optimal strategy for Eve.
\begin{lem}
The  gain $1/G$ that Eve applies before her A/D should take a single value with probability one to  minimize the secrecy rate.
\end{lem}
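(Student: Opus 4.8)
The plan is to reduce Eve's minimization of $R_s$ over $p_G(\cdot)$ to the maximization of a single \emph{linear} functional of the distribution $p_G$, and then to invoke the elementary fact that a linear functional over the probability simplex attains its extremum at a point mass (a degenerate distribution).

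First I would observe that in the secrecy rate $R_s = E_{G,A}[I(X;Y) - I(X;Z)]$, the quantity $I(X;Y)$ computed given $A=a$ depends only on Alice's gain $A$ and not at all on Eve's gain $G$, since Bob inverts $A$ before his A/D and his pre-A/D signal $\tilde Y = X + n_B/A$ never involves $G$. Hence $E_{G,A}[I(X;Y)] = E_A\!\left[I(X;Y\mid A)\right]$ is a constant with respect to Eve's choice of $p_G$, and minimizing $R_s$ over $p_G$ is equivalent to \emph{maximizing} $E_{G,A}\!\left[I(X;Z\mid A,G)\right]$ over $p_G$.

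Next, because Eve does not know the key during the transmission window, her gain $G$ must be drawn independently of Alice's (hidden) gain $A$, so the joint law factors as $p_A(a)\,p_G(g)$. I would therefore write
\begin{equation}
E_{G,A}\!\left[I(X;Z\mid A,G)\right] = \sum_{g} p_G(g)\,\phi(g), \qquad \phi(g) := \sum_{a} p_A(a)\, I\!\left(X;Z\mid A=a,\, G=g\right).\nonumber
\end{equation}
This is a convex combination of the values $\{\phi(g)\}_{g}$, hence a linear functional of $p_G$. Letting $g^\star$ maximize $\phi(g)$ over the discrete alphabet of $G$, we obtain $\sum_{g} p_G(g)\,\phi(g) \le \phi(g^\star)$ for every admissible $p_G$, with equality exactly when $p_G$ places all of its mass on $g^\star$. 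Thus the objective-maximizing — and therefore secrecy-rate-minimizing — distribution is the point mass $p_G(g) = \mathbf{1}\{g = g^\star\}$, which is precisely the assertion that $1/G$ should take a single value with probability one.

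The step deserving the most care is justifying the factorization that turns the objective into a linear functional of $p_G$: one must argue that, since the realized $A$ is concealed from Eve during the short transmission, she can only commit to a fixed distribution $p_G$ that is statistically independent of $A$, so no adaptive or memory-based choice of $G$ can outperform the best point mass. A secondary technicality is the existence of the maximizer $g^\star$; for a finite alphabet this is immediate, while for a countably infinite alphabet one would replace $\phi(g^\star)$ by $\sup_g \phi(g)$ and note that the supremum is approached by (arbitrarily close to) deterministic $G$, so the optimal strategy remains degenerate. Note that this argument establishes only the \emph{structure} of Eve's optimal strategy; the actual value $g^\star$ depends on $p$ (through $p_A$) and must be computed separately.
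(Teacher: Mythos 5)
Your proof is correct and follows essentially the same route as the paper's: both reduce the minimization to maximizing $E_{G,A}[I(X;Z\mid A,G)]$, note that this is a linear functional (convex combination) of $p_G$ over the simplex, and conclude it is maximized by a point mass at the best fixed gain. You merely make explicit two steps the paper leaves implicit --- that $I(X;Y)$ is independent of $G$ and that $G$ must be drawn independently of the concealed $A$ --- which strengthens rather than changes the argument.
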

\begin{proof}
Suppose  $G$ has the following probability mass function:
\[
p_G(g=G_i)=\alpha_i,  \quad \quad  i=1,\cdots,n\\
\]
such that $\sum_{i=1}^{n}\alpha_i =1$. Without loss of generality, assume that for a specific $p$, the maximum information leakage occurs at $G=G_1$, i.e. for any gain $G_i, i=2,\cdots,n$ we have $I(X;Z|G=G_1)\geq I(X;Z|G=G_i)$; hence,
\begin{align*}
I(X;Z)&=\sum_{i=1}^{n}\alpha_i I(X;Z|G=G_i)\\
& \leq \sum_{i=1}^{n}\alpha_i I(X;Z|G=G_1)=I(X;Z|G=G_1)
\end{align*}
\end{proof}
The above lemma can easily be generalized to continuous random variables. Numerical results are given in Sections \ref{sec:4} and \ref{sec:5}.
\subsection{Noiseless Channels}
In the case that the channel between Alice and  Eve is noiseless, $h(Z)$ can be found by setting $\sigma_E^2=0$ in (\ref{eq:n5}). 
Using (\ref{eq:n6}) and the fact that $h(Z|E'_2,X=x)=h(Z|E'_3,X=x)=0$ and $H(E'|X)=0$ we have,
\begin{align}
\nonumber &h(Z|X)\\
\nonumber&=\int_{-\infty}^{\infty}h(Z|E'_1,X=x)p(E'_1|X=x)f_X(x)dx \\
\nonumber&=\int_{-\infty}^{\infty}h(\frac{aX}{g}+n_q|E'_1,X=x)p(E'_1|X=x)f_X(x)dx\\
\nonumber&=\int_{-Gl/|A|}^{gl/|A|}h(n_q)f_X(x)dx\\
&=\log(\delta)\left(1-2Q\left(\frac{gl}{a\sqrt{P}}\right)\right)
\label{eq:n10}
\end{align}
Similarly, in the case that Bob has a noiseless channel, 
\begin{equation}
h(Y|X)=\log(\delta)\left(1-2Q\left(\frac{l}{\sqrt{P}}\right)\right)\label{eq:n11}
\end{equation}

In each case, the secrecy rate can be found by substituting (\ref{eq:n10}) and (\ref{eq:n11}) in (\ref{eq:n13}) and (\ref{eq:n12}), respectively. Numerical results for the noiseless channels are shown in Sections \ref{sec:4} and \ref{sec:5}.
 \begin{figure}
\begin{center}
 \includegraphics[width=.4\textwidth]{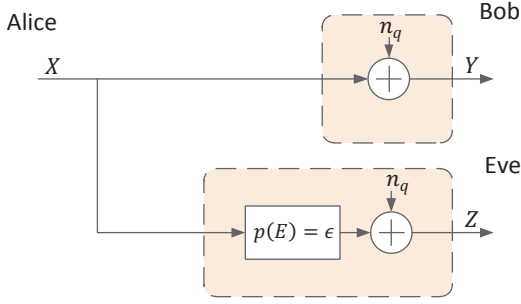}
 \end{center}
 \caption{Gaussian erasure wiretap channel: in the asymptotic case, the erasures/overflows at Eve's A/D due to the rapid power modulation at the transmitter can be modeled by an erasure channel. }
 \label{fig:pic5}
 \end{figure}

Clearly, considering  noiseless channels  makes the results less complicated and thus more insightful.
Hence, we continue our investigation by studying 
the asymptotic behavior of the proposed method (as  $r\rightarrow \infty$) in the noiseless regime, which will help us to achieve some intuition regarding this scheme. 
We assume that  Bob and Eve use  A/D's of the same quality for this analysis.
 
 Since in the noiseless regime $I(X;Y)$ does not depend on $A$, it does not change with  $r$ and thus we need only  evaluate $I(X;Z)$ for our asymptotic analysis.

From  (\ref{eq:cons})  we have,
\begin{equation}\label{eq:As}
A_1=\frac{r}{\sqrt{pr^2+(1-p)}}\quad\text{and}\quad A_2=\frac{1}{\sqrt{pr^2+(1-p)}}
\end{equation}
Let $G(r)$ be the inverse of the gain that Eve employs as a function of $r$.
Recall  from Lemma 1 that $G(r)$ will take a single value with probability one for a given $r$, but
that value can depend on $r$. Since $A_1\rightarrow  1/\sqrt{p}$ and $A_2\rightarrow 0$, we claim that in the limit (as $r\rightarrow \infty$), the best strategy that Eve can take is to choose either $G(r)=\Theta(1)$ or $G(r)=\Theta(r^{-1})$; otherwise, she will get no information (see Appendix A).
  
 First we study the secrecy rates that can be achieved when $G(r)=\Theta(1)$ as $r$ approaches $\infty$. The average secrecy rate is:
 \begin{align}
 \nonumber R_s&=E[I(X;Y)-I(X;Z)]\\ \nonumber
 &=p(I(X;Y|A=A_1)-I(X;Z|A=A_1))\\
 &\quad +(1-p)(I(X;Y|A=A_2)-I(X;Z|A=A_2))\label{eq:ER}
 \end{align}
 Assuming that Bob chooses the optimum range for his A/D, the maximum $I(X;Z|A=A_1)$ that Eve can achieve is $I(X;Y|A=A_1)$  and hence the first term in (\ref{eq:ER}) is zero. To evaluate the second term, putting $G(r)$ and $A=A_2$ in (\ref{eq:n5}) and (\ref{eq:n10}) yields:
 \begin{align}\nonumber
h(Z)=&\left(1-2Q\left(\frac{G(r)l}{A_2\sqrt{P}}\right)\right)\\
&\int_{-a}^a -f_{Z|E'_1}(z)\log(f_{Z|E'_1}(z)) dz+H(E')  
\label{eq:ne13}
\end{align}
where  since $G(r)=\Theta(1)$, $\left(1-2Q\left(\frac{G(r)l}{A_2\sqrt{P}}\right)\right)\to 1$ as $r\to \infty$ and thus $H(E')\to 0$; and, for $|z|<l$,
\begin{align*}
&f_{Z|E'_1}(z)\\
&=\frac{1}{\delta}\left(Q\left(\frac{G(r)(z-\delta/2)}{A_2\sqrt{P}}\right)-Q\left(\frac{G(r)(z+\delta/2)}{A_2\sqrt{P}}\right)\right) \\
&\rightarrow \left\{
\begin{array}{l l}
\frac{1}{\delta}, & \quad  0<|z|<\delta/2\\
\frac{1}{2\delta}, & \quad  |z|=\delta/2\\
0, & \quad \quad \text{otherwise}\\
\end{array} \right.
\end{align*}
 Since   the integrand in (\ref{eq:ne13}) is bounded  for all $r$, from the dominated convergence theorem, $h(Z)\rightarrow \log\delta$ as $r\rightarrow \infty$.
Also since $G(r)=\Theta(1)$,
\begin{align}
 h(Z|X)&=\log(\delta)\left(1-2Q\left(\frac{G(r)l}{A_2\sqrt{P}}\right)\right)
\rightarrow \log(\delta)
\label{eq:n14}
\end{align}
as $r$ approaches $\infty$. Thus,  $I(X;Z|A=A_2)=0$ and hence the average secrecy rate given that $G(r)=\Theta(1)$ is $R_s=(1-p)I(X;Y)$.

Now suppose $G(r)=\Theta(r^{-1})$ and consider the second term in (\ref{eq:ER}).
In the limit,  ${A_2}/{G(r)}=c$ where $c>0$ is a bounded constant. 
Since Bob chooses the optimum range for his A/D, the maximum  $I(X;Z|A=A_2)$ that Eve can achieve is $I(X;Y|A=A_2)$ and thus given that $G(r)=\Theta(r^{-1})$, the second term in (\ref{eq:ER}) is zero. To evaluate  the first term in (\ref{eq:ER}) as $r$ gets large, by substituting   $G(r)=\Theta(r^{-1})$ and $A=A_1$ in (\ref{eq:n5}) and (\ref{eq:n10}), we have $f_{Z|E'_1}(z)\rightarrow 0$ and,
\[
\left(1-2Q\left(\frac{G(r)l}{A_1\sqrt{P}}\right)\right)\rightarrow 0
\] as $r$ approaches infinity and hence $h(Z)\rightarrow 0$. Also by letting $G(r)=\Theta(r^{-1})$ we have,
\begin{align*}
 h(Z|X)&=\log(\delta)\left(1-2Q\left(\frac{G(r)l}{A_1\sqrt{P}}\right)\right)\rightarrow 0 \quad\text{as} \quad r\rightarrow\infty
\end{align*}
Hence, with   probability $p$ the mutual information between Alice and Eve is zero and  the average secrecy rate that can be achieved given $G(r)=\Theta(r^{-1})$ as $r$ approaches $\infty$ is $R_s=pI(X;Y)$. 

We can interpret these results as follows;
when ${A}/{G(r)}={A_1}/{\Theta(r^{-1})}$, the total gain that Eve's A/D sees approaches infinity as $r \rightarrow \infty$; hence, even if Eve uses an A/D with larger range than Bob's A/D, her quantizer  overflows. 
 When ${A}/{G(r)}={A_2}/{\Theta(1)}$, the total gain goes to zero as $r$ approaches infinity and thus even if Eve uses an A/D with better precision, the received signal amplitude is less than one quantization level. In both cases, Eve receives no information about the transmitted signal and thus Eve's channel can be modeled by an erasure channel (Figure \ref{fig:pic5}), where for $G(r)=\Theta(r^{-1})$, the probability of erasure $\epsilon=1-p$ and for  $G(r)=\Theta(1)$,  $\epsilon=p$. 
  
 Hence, the secrecy rate that can be achieved in the asymptotic case (as $r\rightarrow \infty$) is:
 \begin{equation}
 R_s=(1-\epsilon) I(X;Y)
 \label{eq:Rs}
 \end{equation}
 \begin{figure}
\begin{center}
 \includegraphics[width=.5\textwidth]{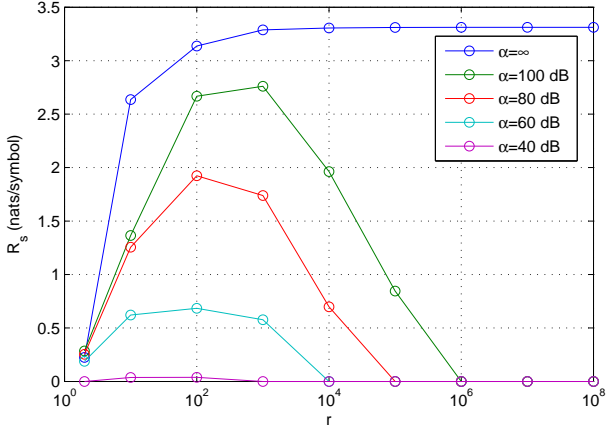}
 \end{center}
 \caption{Achievable secrecy rate versus $r$ (the ratio between the large and the small gain);  both Bob and Eve apply 10-bit A/D's with the dynamic range $l=2.5$.  The SNRs of both Bob's channel and Eve's channel are the same and are denoted by $\alpha$.}
 \label{fig:f7}
 \end{figure} 

   To maximize the achievable secrecy rate, it is reasonable for Alice to choose $p=0.5$. In Section \ref{sec:s1} it is shown that for a 10-bit A/D and the transmitter power $P=1$, the optimum range of the A/D is obtained by setting $l=2.5$, and the corresponding mutual information between Alice and Bob (when the channel between them is noiseless) is $I(X;Y)=6.597$. Hence, using (\ref{eq:Rs}), $R_s\rightarrow 0.5\times 6.597=3.2985$. Figure \ref{fig:f7} (the upper curve) shows the achievable secrecy rate versus $r$ when both Bob's channel and Eve's channel are noiseless. It can be seen that as $r$ gets larger, the achievable secrecy rate goes to a constant which is similar to what is anticipated. Furthermore,  for larger $r$'s ($r\geq10^3$) the  optimum probability that maximizes the worst case secrecy rate is $p=0.5$. 
   These results  show that  our results are consistent to expectations in the limit.
  
 From another point of view, consider that for small values of $\delta$, the quantization noise can be modeled by a zero mean Gaussian random variable with the variance ${\delta^2}/{12}$, where $\delta$ is the size of each quantization level. Thus, this wiretap channel can be modeled by a Gaussian erasure wiretap channel.
   
The secrecy capacity of the Gaussian wiretap channel is \cite{barros2006secrecy}:
\[C_s=\frac{1}{2}\left(\log(1+|h_B|^2\gamma_B)-\log(1+|h_E|^2\gamma_E)\right)^+\]
where $h_B$ and $h_E$ are channel gains, $\gamma_B$ is the SNR at Bob's receiver, and  $\gamma_E$ is the SNR at Eve's receiver. We can use this secrecy capacity in our asymptotic model by setting $h_B=1$ and  modeling the  erasure channel by an unusual fading channel with the following  fading distribution:
\[h_E=\left\{
\begin{array}{l l}
0,&\quad \text{w.p.}\quad \epsilon\\
1,&\quad \text{w.p.}\quad  1-\epsilon\\
\end{array} \right.
\]
Since we assumed that  Eve's A/D is identical to Bob's A/D, $\gamma_E=\gamma_B=\frac{P}{\nicefrac{\delta^2}{12}}$ and thus the secrecy capacity is non-zero only when an erasure at Eve's channel occurs. 
Hence,
\begin{equation}
C_s=\frac{(1-\epsilon)}{2} \log(1+\gamma_B) 
\label{eq:Cs}
\end{equation}
This equation shows that for a 10-bit A/D with $l=2.5$, transmitting power $P=1$, and $\epsilon=p=0.5$, the secrecy capacity is $C_s=3.2822$ which is again very close to what we expect from our asymptotic analysis. Furthermore, on comparing equations (\ref{eq:Rs}) and (\ref{eq:Cs}),  it is seen that in the asymptotic case, the achievable secrecy rate meets this approximate secrecy capacity.

\section{Numerical Results} \label{sec:4}

\subsection{Motivation}\label{sec:41}
\begin{figure}
\begin{center}
 \includegraphics[width=.5\textwidth]{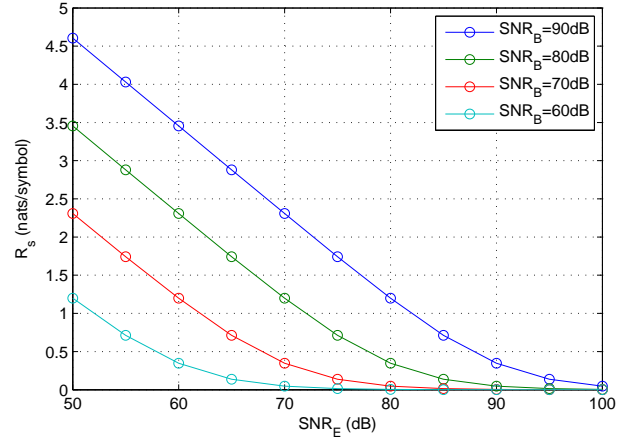}
 \end{center}
 \caption{Secrecy capacity of public discussion for various values of SNR at Bob's receiver when the SNR at Eve's receiver changes from 50 dB to 100 dB and  $P=1$. When the SNR at Bob's receiver is less than the SNR Eve's receiver, the secrecy rate drops rapidly.}
 \label{fig:p5}
 \end{figure} 
When the channel between Alice and Eve is less noisy than the channel between Alice and Bob, if the legitimate users are restricted to one-way and rate-limited communication, the secrecy capacity of the wiretap channel is zero.
However, if we relax the restrictions placed on the  schemes that the legitimate users can apply by allowing two-way communication and the presence of  a noiseless, public, and authenticated channel, public discussion strategies \cite{maurer1993secret,ahlswede1993common} allow the legitimate nodes to  agree on a secret key by extracting information from realizations of correlated random variables. 
This secret-key can then be used in a one-time-pad for secret communication between Alice and Bob. 
A closed form for the general secret-key capacity is not available; however, in the case of a Gaussian source model in which $X\sim\mathcal{N}(0,P)$ and a Gaussian wiretap channel, i.e. when the channel between Alice and Bob and the channel between Alice and Eve are AWGN channels, the secrecy capacity has a simple form \cite[Chapter 5]{bloch2011physical}:
\begin{equation}
C^{\text{SM}}_s=\frac{1}{2}\log\left(1+\frac{P\sigma_E^2}{(P+\sigma_E^2)\sigma_B^2}\right)
\end{equation}
and thus all secret-key rates less than $C^{\text{SM}}_s$ are achievable.
Achievable secrecy rates of  public discussion for various values of the signal-to-noise ratio at Bob's receiver versus signal-to-noise ratio at Eve's receiver are shown in Figure \ref{fig:p5}. 
As can be seen, when  the SNR of Eve's receiver is significantly larger than the  SNR at Bob's receiver, the secrecy rate of public discussion drops rapidly. Our main goal here is to see  whether our scheme can improve the performance in this regime.

\subsection{Noiseless Channels: Eve with the same A/D as Bob}\label{sec:s1}
 \begin{figure}
\begin{center}
 \includegraphics[width=.39\textwidth]{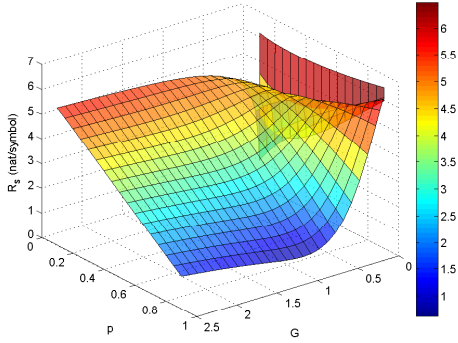}
 \end{center}
 \caption{Achievable secrecy rate vs. the probability $p$ and the gain $G$ at Eve's receiver. Both Bob's and Eve's channels are noiseless and they use identical 10-bit A/D's. The ratio between the two power levels at the transmitter is $r=10^3$ (i.e. 30 dB) and the average transmitting power is $P=1$. A maxi-min rate of $R_s=3.1372$ is achieved.}
 \label{fig:p2}
 \end{figure} 
We begin our investigation by considering only the effect of A/D's on the signals. 
Hence, we  assume that Eve's channel is noiseless, i.e. $n_E=0$ (which benefits the eavesdropper).
However,  we also assume the system nodes are working in a very high SNR regime and thus the channel noise at Bob can  be neglected ($n_B=0$).

Now suppose that both Bob and Eve use 10-bit quantizers ($b=2^{10}$) and the transmitter power is $P=1$. 
Since $\delta=2a/b$, for a fixed number of quantization bits, $I(X;Y)$ is a function of the 	 of the A/D ($a$), and the optimal quantization range that maximizes  $I(X;Y)$ can be found. Since $I(X;Y)$ is an intricate function in terms of $a$, we find the optimum $a$ numerically. 
In this case, the optimum quantization range that maximizes $I(X;Y)$  is $l=2.5$, and the corresponding mutual information between Alice and Bob is $I(X;Y)=6.597$. 
For the remainder of the paper, we use $l=2.5$ in our calculations.
Suppose that Eve has the same A/D as Bob. 
From Lemma 1,  putting a random gain is undesirable for Eve; hence, she  chooses a fixed gain $G$ that minimizes $R_s$. 
Because Alice is not aware of Eve's choice, she has to choose a probability $p$ that maximizes the worst case $R_s$. 

As we discussed in Section \ref{sec:2}, a larger $r$ leads to more eavesdropper confusion and thus as $r$ increases, the secrecy rate  would be expected to increase. However, in the case of noisy channels, a large $r$ also causes noise enhancement at  Bob's receiver that decreases the secrecy rate. 
In order to  get some insight about the dependency of the  secrecy rate on $r$,  curves of $R_s$ versus $r$ are  shown in Figure \ref{fig:f7}. For each curve, the SNR at both Eve's receiver and Bob's receiver are the same and are denoted by $\alpha$. 
Hence, in order to achieve high secrecy rates by avoiding excessive noise enhancement at Bob's receiver, for the rest of the paper we set $r=10^3$.
The plot of $R_s$ versus $p$ and $G$ for $P=1$ and $r=10^3$ (i.e. 30 dB) where both Bob and Eve are each using a 10-bit A/D is shown in Figure \ref{fig:p2}. This function is complicated and hence the optimum value of $p$ cannot be derived analytically. 
Numerical analysis shows that $p\approx 0.45$ maximizes the worst case $R_s$, and the maxi-min value is ${R_s}=3.1366$. Hence, choosing $p=0.45$ guarantees that at least the secrecy rate ${R_s}=3.1366$ can be achieved. 
\subsection{Noiseless Channels: Eve with a Better A/D than Bob}
\begin{figure}
\begin{center}
 \includegraphics[width=.39\textwidth]{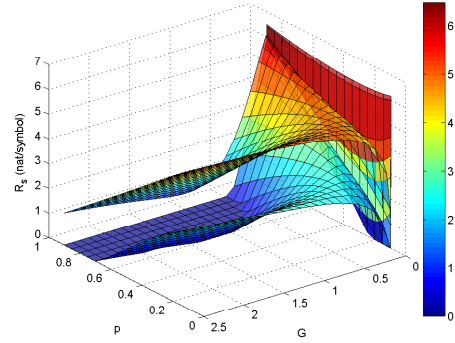}
 \end{center}
 \caption{Achievable secrecy rate vs. the probability $p$ and the gain at Eve's receiver, $G$ for the case of noiseless channels. The ratio between the two power levels at the transmitter is $r=10^3$ (i.e. 30 dB) and the average transmitting power is $P=1$. In the upper curve, both Bob and Eve have the same 10-bit A/D's. In the lower curve, Bob uses a 10-bit A/D while Eve uses a 14-bit A/D (Eve's A/D is 24 dB better than Bob's A/D) and a maxi-min rate of $R_s=1.2478$ is achieved (for $p=0.4$). }
 \label{fig:p3}
 \end{figure}
 Now suppose that Eve has access to a better A/D than Bob. 
Depending on the gain that Eve applies before her A/D, a better A/D results in less erasures and/or less A/D overflows. Hence, the mutual information between Alice and Eve increases and consequently, the achievable secrecy rate decreases.
Figure \ref{fig:p3} shows this effect  versus $p$ and $G$. It can be seen that even if Eve uses an A/D which is 24 dB (4 bits) better than Bob's A/D (Eve has a 14-bit A/D while Bob has a 10-bit A/D), by choosing an appropriate value for $p$, a positive secrecy rate can be achieved. 
In this example, by choosing $p=0.4$, a secrecy rate $R_s=1.2426$ is achievable. 
Even if we do not change the probability $p$ from the previous section ($p=0.45$), assuming that Alice is not aware of Eve's better A/D,  a secrecy rate $R_s=0.9225$ is achievable. 
In spite of having a better A/D, Eve will still lose some symbols and hence a positive secrecy rate is available. 
This is because the ratio between the large and the small gain,  $A_1$ and $A_2$, is $10^3$, while Eve's A/D has only 16 times better resolution; 
thus, she still needs to compromise between resolution and overflow. 
To cancel the effect of these gains completely, Eve has to use an A/D that has an effective  resolution after taking into account jamming, interference, etc. on the order of $10^3$ times (10 bits) better than Bob's A/D, which would be very difficult in an adversarial environment. 
 \subsection{Noisy Main Channel, Noiseless Eavesdropper's channel}\label{sec:s3}
\begin{figure}
\begin{center}
 \includegraphics[width=.5\textwidth]{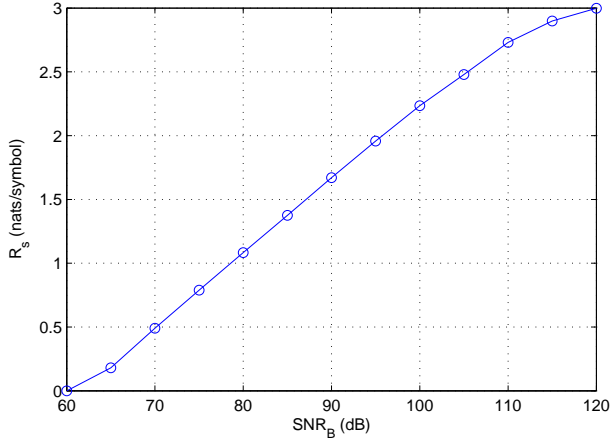}
 \end{center}
 \caption{Achievable secrecy rate vs. SNR at Bob's receiver while the SNR at Eve's receiver is infinity (Eve has perfect access to the transmitted signal) for $r=10^3$, $P=1$ and  Bob and Eve applying 10-bit A/D's.  Note that the assumption of Eve having a  noiseless channel is the extreme case when Eve has perfect access to the transmitter's output (for instance, the eavesdropper is able to pick up the transmitter's radio) and hence  other secrecy methods  are not effective. Using the proposed method, a positive secrecy rate can be achieved over short range at reasonable power.}
 \label{fig:p4}
 \end{figure} 
  Now we look at the extreme case that Eve is able to receive exactly what Alice transmits and receives (e.g. the adversary is able to pick up the transmitter's radio and hook directly to the antenna), but the channel between Alice and Bob is noisy and hence no other technique is effective. 
 In other words,  the channel between Alice and Bob experiences an additive white Gaussian noise ($n_B\sim\mathcal{N}(0,\sigma_B^2)$), while Eve's channel is noiseless ($n_E=0$). Figure \ref{fig:p4} shows the secrecy rate $R_s$ that can be achieved using the proposed scheme versus the signal-to-noise ratio (SNR) at Bob's receiver. In this case, the transmitted power $P=1$, the ratio between the large and the small gain is 30 dB, and both Bob and Eve use 10-bit A/D's. It can be seen that, although Eve's channel is much better than Bob's channel, when the SNR at Bob's receiver is greater than 60 dB, which could be made common in a short-range application as described in Section \ref{sec:1}, a positive secrecy rate is available. 
 By comparing the noise-free result in Figure \ref{fig:f7} for $r=10^3$ and Figure \ref{fig:p4}, it can be seen that the secrecy rate when SNR at Bob is 120 dB is still less than the secrecy rate when Bob's channel is noiseless.
 
\subsection{Noisy Channels}\label{sec:5}
%
When both channels are noisy, the achievable secrecy rate of  the proposed method  versus the SNR at Eve's receiver  for various values of the SNR at Bob's receiver is shown in Figures \ref{fig:p6}. 
The transmitted power $P=1$, the ratio between the large and the small gain is 30 dB, and both Bob and Eve use 10-bit A/D's.
It can be seen that by applying the proposed method for the case of Eve  with a (significantly) better channel than Bob, which is the regime of interest per Figure \ref{fig:p5}, reasonable secrecy rates can be achieved. 
Note that in our method we are generating an advantage for the legitimate nodes to be used with wiretap coding, and thus, because public discussion approaches assume the presence of a public authenticated channel, public discussion  should not be viewed as a competitor to the proposed scheme. 
Rather, if such a public authenticated channel exists and  two-way communication is possible, our method can be used \textit{in conjunction} with public discussion techniques to result in higher secrecy rates.
Nevertheless, per Figure \ref{fig:p5}, public discussion provides motivation for the regime where advances are needed given the current state of the art.
 \begin{figure}
\begin{center}
 \includegraphics[width=.5\textwidth]{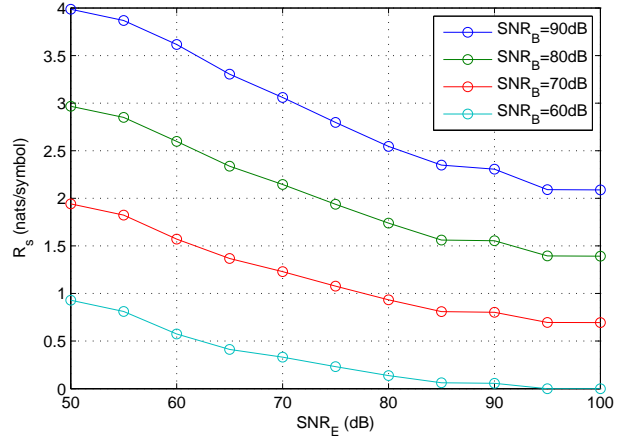}
 \end{center}
 \caption{Achievable secrecy rates for various values of the SNR at Bob's receiver when the SNR at Eve's receiver changes from 50 dB to 100 dB. The settings are $r=10^3$, $P=1$, and  Bob and Eve are applying 10-bit A/D's. When the SNR of the channel between Alice and Eve is significantly better than the SNR of the channel between Alice and Bob,  reasonable secrecy rates are still achievable.}
 \label{fig:p6}
 \end{figure} 
 \section{Conclusion}\label{sec:7}
In this paper, we introduce a new approach that exploits a short-term cryptographic key to 
force different orderings at Bob and Eve of two operators, one of which is necessarily non-linear, 
to obtain the desired advantage for information-theoretic security in a wireless communication
system regardless of the location of Eve. We then investigate a simple power modulation instantiation of the approach.  It is shown that when Eve's channel condition is significantly  better than the Bob's channel, reasonable secrecy rates   can still be achieved using our proposed method in this challenging regime. 
In particular,  even in the case that the adversary is able to pick up the transmitter's radio (i.e. Eve has perfect access to the output of the transmitter), a reasonable secrecy rate is achievable at
high SNRs which might apply to a short-range wireless system.  For example, one might use
the transmission power of typical cellular systems with the corresponding excess power at short
ranges to establish a secure radio system in a limited area.

Although we have considered the case of Eve with a better A/D than Bob, the clear risk to the
approach is still that of asymmetric capabilities at the receivers.  For example, if we employ the
simple power modulation approach studied extensively here, Eve may employ
multiple A/D's with different gain settings in front of each.  Hence, Eve would be able to record two signals independently and decode them later when she gets the key or extracts the key based on the pattern of erasures and overflows at each A/D.   A simple approach to combat this attack is rather than applying just two power levels, the transmitter can apply many power levels.  
More promising, however, is to consider adding memory to the signal warping
  process \cite{isit2013}.

Broadly considering potential techniques for everlasting security in wireless systems, including 
that proposed here, yields that each approach still holds some risk.  In the case of 
cryptographic security, assumptions must be made on both the hardness of the 
problem and the current/future computational capabilities of the adversary.  In the case of
standard information-theoretic security, assumptions must be made on the quality of the channel
to Eve, generally corresponding to limitations on her location.  In the method proposed
here, assumptions must be made on Eve's current conversion hardware capabilities, but,
as in standard information-theoretic secrecy, there is no assumption on future capabilities.  All three
approaches thus have different applicability.
\section*{Appendix A}
In this section we show that as $r\to\infty$ the only strategy that Eve can take to obtain information from the signal she receives  is to choose either $G(r)=\Theta(1)$ or $G(r)=\Theta(r^{-1})$.
Instead of applying  $G(r)=\Theta(1)$ or $G(r)=\Theta(r^{-1})$, the two other possibilities for Eve are to choose $G(r)$ such that either  $\lim_{r \to \infty} r^{-1}/G(r)\to 0$ or  $\lim_{r \to \infty} r^{-1}/G(r)\to \infty$  (and obviously provided that $G(r)\neq\Theta(1)$). 

First suppose $\lim_{r \to \infty} r^{-1}/G(r)\to 0$ and consider  $I(X;Z|A=A_1)$ in (\ref{eq:ER}). Since $G(r)\neq\Theta(1)$ and from (\ref{eq:As}),  $\lim_{r \to \infty} A_1/G(r)\to 0$ and hence,
\begin{align}
h(Z)=\left(1-2Q\left(\frac{G(r)l}{A_1\sqrt{P}}\right)\right)\int_{-a}^a -f_{Z|E'_1}(z)\log(f_{Z|E'_1}(z)) dz  
\label{eq:nee13}
\end{align}
where, for $|z|<l$,
\begin{align*}
f_{Z|E'_1}(z)&=\frac{1}{\delta}\left(Q\left(\frac{G(r)(z-\delta/2)}{A_1\sqrt{P}}\right)-Q\left(\frac{G(r)(z+\delta/2)}{A_1\sqrt{P}}\right)\right) \\
&\rightarrow \left\{
\begin{array}{l l}
\frac{1}{\delta}, & \quad  0<|z|<\delta/2\\
\frac{1}{2\delta}, & \quad  |z|=\delta/2\\
0, & \quad \quad \text{otherwise}\\
\end{array} \right.
\end{align*}
and $\left(1-2Q\left(\frac{G(r)l}{A_1\sqrt{P}}\right)\right)\rightarrow 1$ as $r\rightarrow \infty$. Since   the integrand in (\ref{eq:nee13}) is bounded  for all $r$ and from the dominated convergence theorem, $h(Z)\rightarrow \log\delta$ as $r\rightarrow \infty$.
Also, since $\lim_{r \to \infty} r^{-1}/G(r)\to 0$,
\begin{align}
 h(Z|X)&=\log(\delta)\left(1-2Q\left(\frac{G(r)l}{A_1\sqrt{P}}\right)\right)
\rightarrow \log(\delta)
\label{eq:500}
\end{align}
as $r$ approaches $\infty$ and thus  $I(X;Z|A=A_1)=0$. Now consider  $I(X;Z|A=A_2)$ in (\ref{eq:ER}); by substituting $A_1$ with $A_2$ in (\ref{eq:nee13}) and (\ref{eq:500}), and since $\lim_{r \to \infty} A_2/G(r)\to 0$,  we have $I(X;Z|A=A_2)=0$. Consequently, given that $\lim_{r \to \infty} r^{-1}/G(r)\to 0$,  the average information that Eve obtains is zero.

Now suppose  $\lim_{r \to \infty} r^{-1}/G(r)\to \infty$ and consider the first term $I(X;Z|A=A_1)$ in (\ref{eq:ER}).  The fact that $\lim_{r \to \infty} r^{-1}/G(r)\to \infty$ implies that  in the limit as $r\to\infty$, $ A_1/G(r)$ also goes to $\infty$ and thus from (\ref{eq:n5}) and (\ref{eq:n10}) we have $f_{Z|E'_1}(z)\rightarrow 0$. Also,
$\left(1-2Q\left(\frac{G(r)l}{A_1\sqrt{P}}\right)\right)\rightarrow 0$ as $r$ approaches infinity and hence $h(Z)\rightarrow 0$. Furthermore,
\begin{align}\label{eq:hzx}
 h(Z|X)&=\log(\delta)\left(1-2Q\left(\frac{G(r)l}{A_1\sqrt{P}}\right)\right)\rightarrow 0 
\end{align}
as $r\to\infty$ and thus $I(X;Z|A=A_1)=0$. Considering  $I(X;Z|A=A_2)$ in (\ref{eq:ER}) and by putting $A_2$ instead of $A_1$ in (\ref{eq:hzx}), since   $A_2/G(r)\rightarrow \infty$ in the limit as $r\rightarrow \infty$, we have $I(X;Z|A=A_2)=0$. Hence, by choosing $\lim_{r \to \infty} r^{-1}/G(r)\to \infty$ Eve gets no information about the transmitted signal.

\bibliographystyle{ieeetr}
\bibliography{mycite}

\begin{thebibliography}{10}

\bibitem{stinson2006cryptography}
D.~Stinson, {\em Cryptography: Theory and practice}.
\newblock CRC press, 2006.

\bibitem{bensonverona}
R.~Benson, ``The verona story,'' {\em National Security Agency Central Security
  Service, Historical Publications (available via WWW)}.

\bibitem{shannon1949communication}
C.~Shannon, ``Communication theory of secrecy systems,'' {\em Bell System
  Technical Journal}, vol.~28, no.~4, pp.~656--715, 1949.

\bibitem{wyner1975wire}
A.~Wyner, ``The wire-tap channel,'' {\em Bell System Technical Journal},
  vol.~54, no.~8, pp.~1355--1387, 1975.

\bibitem{csiszar1978broadcast}
I.~Csisz{\'a}r and J.~K{\"o}rner, ``Broadcast channels with confidential
  messages,'' {\em IEEE Transactions on Information Theory}, vol.~24, no.~3,
  pp.~339--348, 1978.

\bibitem{maurer1993secret}
U.~Maurer, ``Secret key agreement by public discussion from common
  information,'' {\em IEEE Transactions on Information Theory}, vol.~39, no.~3,
  pp.~733--742, 1993.

\bibitem{ahlswede1993common}
R.~Ahlswede and I.~Csisz{\'a}r, ``Common randomness in information theory and
  cryptography. i. secret sharing,'' {\em IEEE Transactions on Information
  Theory}, vol.~39, no.~4, pp.~1121--1132, 1993.

\bibitem{bloch2011physical}
M.~Bloch and J.~Barros, {\em Physical-layer security: From information theory
  to security engineering}.
\newblock Cambridge University Press, 2011.

\bibitem{negi2005secret}
R.~Negi and S.~Goel, ``Secret communication using artificial noise,'' in {\em
  IEEE Vehicular Technology Conference, 2005}, vol.~62, p.~1906.

\bibitem{goel2005secret}
S.~Goel and R.~Negi, ``Secret communication in presence of colluding
  eavesdroppers,'' in {\em IEEE Military Communications Conference, 2005},
  pp.~1501--1506.

\bibitem{he2008two}
X.~He and A.~Yener, ``Two-hop secure communication using an untrusted relay: A
  case for cooperative jamming,'' in {\em IEEE GLOBECOM 2008}, pp.~1--5, 2008.

\bibitem{lai2007cooperative}
L.~Lai and H.~El~Gamal, ``Cooperative secrecy: The relay-eavesdropper
  channel,'' in {\em IEEE International Symposium on Information Theory},
  pp.~931--935, 2007.

\bibitem{gopala2008secrecy}
P.~Gopala, L.~Lai, and H.~El~Gamal, ``On the secrecy capacity of fading
  channels,'' {\em IEEE Transactions on Information Theory}, vol.~54, no.~10,
  pp.~4687--4698, 2008.

\bibitem{Dennis2011JSAC}
D.~Goeckel, S.~Vasudevan, D.~Towsley, S.~Adams, Z.~Ding, and K.~Leung,
  ``Artificial noise generation from cooperative relays for everlasting secrecy
  in two-hop wireless networks,'' {\em IEEE Journal on Selected Areas in
  Communications}, vol.~29, pp.~2067--2076, 2011.

\bibitem{sheikholeslami2012physical}
A.~Sheikholeslami, D.~Goeckel, H.~Pishro-Nik, and D.~Towsley, ``Physical layer
  security from inter-session interference in large wireless networks,'' in
  {\em IEEE INFOCOM Proceedings}, pp.~1179--1187, 2012.

\bibitem{cachin1997unconditional}
C.~Cachin and U.~Maurer, ``Unconditional security against memory-bounded
  adversaries,'' {\em Advances in Cryptology}, pp.~292--306, 1997.

\bibitem{kuchibhatla2010imft}
R.~Kuchibhatla, ``Imft 25-nm mlc nand: technology scaling barriers broken,''
  {\em EE Times News and Analysis}, 2010.

\bibitem{krone2009fundamental}
S.~Krone and G.~Fettweis, ``Fundamental limits to communications with
  analog-to-digital conversion at the receiver,'' in {\em IEEE 10th Workshop on
  Signal Processing Advances in Wireless Communications}, pp.~464--468, 2009.

\bibitem{krone2010fundamental}
S.~Krone and G.~Fettweis, ``A fundamental physical limit to data transmission
  and processing,'' {\em Signal Processing Letters, IEEE}, vol.~17, no.~3,
  pp.~305--307, 2010.

\bibitem{Hashemi2011RFIC}
R.~Harjani, B.~Sadler, H.~Hashemi, and J.~Rudell~(Organizers), ``Systems and
  circuits for sensing, co-existence, and interference mitigation in sdr and
  cognitive radios,'' in {\em IEEE RFIC Symposium}, 2011.

\bibitem{kahn1952single}
L.~Kahn, ``Single-sideband transmission by envelope elimination and
  restoration,'' {\em Proceedings of the IRE}, vol.~40, no.~7, pp.~803--806,
  1952.

\bibitem{kenington2000high}
P.~Kenington, {\em High linearity RF amplifier design}.
\newblock Artech House, Inc., 2000.

\bibitem{diffie1976new}
W.~Diffie and M.~Hellman, ``New directions in cryptography,'' {\em IEEE
  Transactions on Information Theory}, vol.~22, no.~6, pp.~644--654, 1976.

\bibitem{paar2010understanding}
C.~Paar and J.~Pelzl, {\em Understanding cryptography: a textbook for students
  and practitioners}.
\newblock Springer, 2010.

\bibitem{robshaw2008new}
M.~Robshaw and O.~Billet, {\em New stream cipher designs: the eSTREAM
  finalists}, vol.~4986.
\newblock Springer, 2008.

\bibitem{widrow2008quantization}
B.~Widrow and I.~Koll{\'a}r, {\em Quantization noise}.
\newblock Cambridge University Press, 2008.

\bibitem{bloch2008secrecy}
M.~Bloch and J.~Laneman, ``On the secrecy capacity of arbitrary wiretap
  channels,'' in {\em 46th Annual Allerton Conference}, pp.~818--825, 2008.

\bibitem{cover2006elements}
T.~Cover, J.~Thomas, J.~Wiley, {\em et~al.}, {\em Elements of information
  theory}.
\newblock Wiley, 2006.

\bibitem{cald1990}
A.~Calderbank and L.~Ozarow, ``Non-equiprobable signaling on the gaussian
  channel,'' {\em IEEE Transactions on Information Theory}, vol.~36,
  pp.~726--740, 1990.

\bibitem{kohno1995spread}
R.~Kohno, R.~Meidan, and L.~Milstein, ``Spread spectrum access methods for
  wireless communications,'' {\em IEEE Communications Magazine}, vol.~33,
  no.~1, pp.~58--67, 1995.

\bibitem{barros2006secrecy}
J.~Barros and M.~Rodrigues, ``Secrecy capacity of wireless channels,'' in {\em
  IEEE International Symposium on Information Theory}, pp.~356--360, 2006.

\bibitem{isit2013}
A.~Sheikholeslami, D.~Goeckel, and Pishro-Nik, ``Artificial intersymbol
  interference ({ISI}) to exploit receiver imperfections for secrecy,'' {\em to
  apear in 2013 IEEE International Symposium on Information Theory
  Proceedings}.

\end{thebibliography}

\end{document}